\documentclass[12pt]{article}

\usepackage{algorithm}
\usepackage{algorithmic}
\usepackage{amsmath}
\usepackage{amssymb}
\usepackage{amsthm}

\newtheorem{theorem}{Theorem}
\newtheorem{definition}[theorem]{Definition}
\newtheorem{corollary}[theorem]{Corollary}
\newtheorem{proposition}[theorem]{Proposition}

\addtolength{\oddsidemargin}{-.875in}
\addtolength{\evensidemargin}{-.875in}
\addtolength{\textwidth}{1.75in} \addtolength{\topmargin}{-.875in}
\addtolength{\textheight}{1.75in}

\title{The Discrete Sell or Hold Problem with Constraints on Asset Values}
\author{\small Ye Du\\
       {\small Department of Financial Engineering, Southwestern University of Finance and Economics}\\
       {\small duye@swufe.edu.cn}}

\date{}

\begin{document}
\maketitle \abstract{The discrete sell or hold problem (DSHP), which
is introduced in \cite{H12}, is studied under the constraint that
each asset can only take a constant number of different values. We
show that if each asset can take only two values, the problem
becomes polynomial-time solvable. However, even if each asset can
take three different values, DSHP is still NP-hard.  An
approximation algorithm is also given under this setting.

\section{Introduction}
There are three key factors in asset management, which are {\it
asset allocation}, {\it security selection} and {\it timing}
\cite{W11}. A common scenario faced by an asset management team is
to meet various kinds of capital requirements, such as customer
withdraw needs or regulation requirements, at the end of a fiscal
year. In order to achieve those capital goals, an asset manager may
need to sell part of its portfolio holdings to generate as much
capital as possible. Dealing with this scenario, an asset manager
should take {\it security selection} (which assets to sell), and
{\it timing} (when to sell an asset) into consideration.

Since the future price of a financial asset is a stochastic process,
the {\it timing} issue poses a lot of challenges to asset managers.
Instead of studying when to sell an asset in a continuous time
setting, a simplified model is to study whether we should sell an
asset now or at a specified future date. The problem, which is
called {\it the discrete sell or hold problem (DSHP)}, is introduced
by He {\it et. al.} \cite{H12}. It can be modeled as a two-stage
stochastic combinatorial optimization problem. It is shown that DSHP
is NP-hard to solve.

However, the model in \cite{H12} does not impose any constraints on
the possible values that an asset can take. A natural conjecture is
that the number of different values an asset can take would have
great impact on the complexity of DSHP. Following this idea, we can
further simplify the model by assuming that each asset can take only
a constant number of different values. In particular, we study the
case when an asset can take only three different values. This case
coincides with the idea of {\it binomial tree model}, which is a
standard finance textbook model \cite{FE00} for asset pricing. In
the binomial tree model, the current value of asset is $v$; in the
future stage, it can take two possible values: a high value scenario
$v_u$ and a low value scenario $v_d$. Thus, it is very interesting
to study the discrete sell or hold problem under this simplified
model. Furthermore, the case that an asset can take only two
possible values is also studied.

The organization of the paper is as follows: the mathematical
formulation of the discrete sell or hold problem is given in section
\ref{sec:f}; the complexity of the problem with constraints on asset
values is studied in section \ref{sec:c}; an approximation algorithm
is presented in section \ref{sec:aa}; the conclusion is in section
\ref{sec:final}.


\section{The Problem Formulation}{\label{sec:f}}
The discrete sell or hold problem (DSHP) deals with a decision
problem of two stages. There are $n$ assets $A=\{1,2,...,n\}$ to
manage. At the first stage, the value of each asset $c_i$ is known.
However, at the second stage, there are $m$ scenarios. Under each
scenario $j$, every asset $i$ has value $f_{ij}$. We plan to sell
$k$ assets to generate as much revenue as possible. Each asset can
be hold or sold. If it is sold, we have to decide whether we should
sell it at the first stage or the second stage. The problem can be
formulated as an integer programming problem as follows.
\[
\begin{array}{lcr}
&\max\sum\limits_{i=1}^nc_ix_i+\sum\limits_{j=1}^mp_j\sum\limits_{i=1}^nf_{ij}y_{ij}&\\
& s.t. \sum\limits_{i=1}^nx_i+\sum\limits_{i=1}^ny_{ij}=k&j=1,...,m\\
&x_i+y_{ij}\leq 1& i=1,...,n,j=1,...,m\\
&x_i \in \{0,1\},y_{ij} \geq 0&
\end{array}
\]

Since the discrete sell or hold problem deals with uncertainty in
the future state, it belongs to a category of problems called {\it
combinatorial stochastic programming problem}, which is extensively
studied in computer science and operation research literatures
\cite{DRS05,GRS04, GRS07, KS06, RS06}.

\section{DSHP with Constraints on Asset Values} \label{sec:c}
\subsection{The Case when Assets can take only two values}
\begin{proposition}If $c_i \in \{v_{max},v_{min}\}, i=1,...,n$ and $f_{ij}\in
\{v_{max},v_{min}\}, i=1,...,n,j=1,...,m$, DSHP is solvable in
$O(nm)$ time.\end{proposition}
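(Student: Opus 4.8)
The plan is to reduce DSHP, under this two-value restriction, to a one-shot combinatorial choice. A feasible solution is equivalent to choosing a set $S\subseteq A$ of assets to sell at the first stage, with $|S|=s\le k$, and then, independently in each scenario $j$, selling the $k-s$ assets of $A\setminus S$ with the largest second-stage values $f_{ij}$. Hence the value of the best solution with a given $S$ is $\sum_{i\in S}c_i+\sum_{j=1}^m p_j\,\Sigma_j(S)$, where $\Sigma_j(S)$ denotes the sum of the $k-s$ largest numbers in $\{f_{ij}:i\in A\setminus S\}$ (and we assume, as usual, $\sum_j p_j=1$ and $k\le n$). Write $h=|\{i:c_i=v_{max}\}|$ for the number of \emph{high} assets.

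The key step is to show there is an optimal $S$ consisting of exactly $\min(h,k)$ high assets, which I would prove by two exchange arguments. (i) If $S$ contains a low asset $i^*$ (so $c_{i^*}=v_{min}$), delete it from $S$: in every scenario the set realizing $\Sigma_j(S)$ may now be enlarged by $i^*$, so $\Sigma_j$ grows by at least $f_{i^*j}\ge v_{min}$; since $\sum_jp_jf_{i^*j}\ge v_{min}=c_{i^*}$, the objective does not decrease, and the cardinality and packing constraints plainly still hold. (ii) If $|S|<\min(h,k)$, add a high asset $i^*\notin S$: the first-stage term increases by $v_{max}$, while in each scenario $\Sigma_j$ decreases by at most $v_{max}$ (one sale of value at most $v_{max}$ is dropped), so again the objective does not decrease and $|S|+1\le k$. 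Applying (i) until $S$ has no low asset and then (ii) until $|S|=\min(h,k)$ yields an optimal $S$ of the claimed form. In particular, if $h\ge k$ the optimum is $k\,v_{max}$, which is anyway the obvious upper bound since $c_i,f_{ij}\le v_{max}$ and $\sum_jp_j=1$.

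With this structure the algorithm is immediate. Compute $h$ in $O(n)$ time; if $h\ge k$ return $k\,v_{max}$ (sell any $k$ high assets now). Otherwise take $S$ to be all $h$ high assets; for each scenario $j$ count in $O(n)$ time the number $u_j$ of low assets with $f_{ij}=v_{max}$, whence in scenario $j$ we sell $\min(u_j,k-h)$ assets at $v_{max}$ and the remaining $\max(0,k-h-u_j)$ at $v_{min}$. The optimal value is
\[
h\,v_{max}+\sum_{j=1}^m p_j\Big(\min(u_j,k-h)\,v_{max}+\max(0,k-h-u_j)\,v_{min}\Big),
\]
and the corresponding $x_i,y_{ij}$ are recovered along the way; the whole computation is $O(nm)$.

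The per-scenario greedy and the arithmetic are routine. The one place that needs care — the main obstacle — is the structural lemma: checking that each exchange preserves feasibility ($\sum_i x_i+\sum_i y_{ij}=k$ and $x_i+y_{ij}\le 1$ in every scenario) and that the accounting of gains and losses really cancels, which is exactly where $\sum_j p_j=1$ enters. I would also remark that relaxing the $y_{ij}$ to be continuous, as in the stated integer program, does not change the optimum, since each second-stage scenario subproblem is a trivial ``select the $k-s$ largest'' linear program that admits an integral optimal solution.
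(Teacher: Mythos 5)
Your proposal is correct and follows essentially the same route as the paper: the same structural claim (an optimal solution sells exactly $\min(h,k)$ of the assets with $c_i=v_{max}$ at the first stage and fills in greedily per scenario), established by the same two exchange arguments. Your write-up is somewhat more careful than the paper's --- notably the explicit use of $\sum_j p_j=1$ when balancing the per-scenario losses against the first-stage gain, and the remark that the continuous $y_{ij}$ admit an integral optimum --- but the underlying argument is identical.
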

\begin{proof} Suppose $C=\{i|c_i=v_{max}\}$. We will construct an optimal solution to the DSHP as follows:
if $|C|<k$, all the assets in $C$ are sold at the first stage; for
each scenario at the second stage, the most valuable $k-|C|$ assets
are sold; Otherwise, if $|C| \geq k$, arbitrary $k$ assets in $C$
are sold at the first stage while nothing is sold at the second
stage.

We will show that the above solution is an optimal one. For $i \in
\{1,..,n\}$, if $c_i=v_{min}$, the revenue of selling asset i at the
second stage is always at least as good as selling it at the first
stage. Thus, in order to maximize the total revenue, we don't have
to sell asset i at the first stage. If $c_i=v_{max}$, suppose that
in an optimal solution $\mathcal{O}$, asset i is not sold at the
first stage, and some assets are sold for each scenario $j \in
\{1,...,m\}$ at the second stage. Then, we can construct another
solution $\mathcal{O}^{'}$ that asset i is sold at the first stage
while one asset $r_j$ is removed from the sold asset list of the
optimal solution $\mathcal{O}$ for each scenario $j$. For any choice
of $r_j$, the solution $\mathcal{O}^{'}$ is at least as good as
$\mathcal{O}$. Thus, if $c_i=v_{max}$, asset i should be sold at the
first stage to achieve the optimal revenue. Combining both cases, it
is easy to see that the solution we construct above is an optimal
solution.
\end{proof}

\begin{corollary}For an asset i that $c_i < \sum\limits_{j=1}^m p_jf_{ij}$, it should not be sold at the first stage in an optimal solution of
DSHP.\end{corollary}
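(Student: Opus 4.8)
The plan is to use a one-asset exchange argument, exactly in the spirit of the $c_i = v_{min}$ case in the proof of the Proposition, but without any hypothesis on the set of possible values. Suppose, for contradiction, that there is an optimal solution $\mathcal{O} = (x,y)$ of DSHP in which asset $i$ is sold at the first stage, i.e.\ $x_i = 1$; by the constraint $x_i + y_{ij} \le 1$ this forces $y_{ij} = 0$ for every scenario $j$. I would then define a competitor solution $\mathcal{O}'=(x',y')$ that coincides with $\mathcal{O}$ in every coordinate except those attached to asset $i$, where I set $x'_i = 0$ and $y'_{ij} = 1$ for all $j = 1,\dots,m$. Intuitively this just moves asset $i$ from the ``sold at stage one'' list into the ``sold at stage two'' list, and it does so in every scenario at once.

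First I would verify feasibility of $\mathcal{O}'$. The crucial observation is that the move is performed simultaneously in all $m$ scenarios, so for each $j$ the cardinality constraint is preserved: $\sum_i x'_i + \sum_i y'_{ij} = \left(\sum_i x_i - 1\right) + \left(\sum_i y_{ij} + 1\right) = k$. The packing constraint for asset $i$ now reads $x'_i + y'_{ij} = 0 + 1 = 1 \le 1$, and for every other asset it is untouched; integrality of $x'_i = 0$ and nonnegativity of $y'_{ij} = 1$ are immediate. Hence $\mathcal{O}'$ is feasible.

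Next I would compare objective values. All terms not involving asset $i$ are identical in $\mathcal{O}$ and $\mathcal{O}'$, so the change in objective equals the asset-$i$ contribution of $\mathcal{O}'$ minus that of $\mathcal{O}$, namely $\left(\sum_{j=1}^m p_j f_{ij}\right) - c_i$, which is strictly positive by the hypothesis $c_i < \sum_{j=1}^m p_j f_{ij}$. Thus $\mathcal{O}'$ strictly beats $\mathcal{O}$, contradicting optimality; therefore asset $i$ is not sold at the first stage in any optimal solution.

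I do not expect a real obstacle here: the argument is a routine swap. The only points requiring a little care are (i) in the feasibility check, insisting that asset $i$ be reassigned to the second stage in \emph{all} scenarios $j$ so that every cardinality constraint remains tight, and (ii) noting that it is the \emph{strict} inequality in the hypothesis that upgrades the conclusion from ``some optimal solution avoids a first-stage sale of $i$'' to ``every optimal solution does.''
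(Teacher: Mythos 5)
Your swap argument is correct and is exactly the argument the paper intends: the Corollary is stated without proof as an immediate generalization of the $c_i=v_{min}$ case in the Proposition's proof, where moving asset $i$ from the first stage to the second stage in every scenario preserves all cardinality and packing constraints and changes the objective by $\sum_{j=1}^m p_jf_{ij}-c_i>0$. Your added care about reassigning $i$ in \emph{all} scenarios simultaneously and about the strict inequality ruling out every optimal solution is exactly the right bookkeeping.
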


\subsection{The Case when Assets can take three values}
\begin{definition}
Given a graph $G=(V,E)$, the dominating set of $G$ is a set of
vertices $D$ such that $\forall v \in V$, either $v \in D$ or
$\exists u \in D$ such that $(v,u) \in E$.
\end{definition}
The problem of finding the minimum dominating set for a regular
planar graph with degree 4, which is called the {\it MDS-RPG4}
problem, is NP-hard \cite{GJ79}. We will reduce the MDS-RPG4 problem
to DSHP and show that solving DSHP is NP-hard as well.

The construction of the reduction is: There are $n$ assets and $n$
scenarios, each of which has probability $\frac{1}{n}$ to realize.
For each asset $i$, its value at the first stage is $1$, while its
value at the second stage under each scenario is defined according
to the graph $G=(V,E)$ as following:
\begin{itemize}
\item $\forall i, f_{ii}=1-B$;
\item $\forall i,j$, if $(i,j)\in E$, $f_{ij}=1-B$;
\item $\forall i,j$, if $(i,j)\notin E$, $f_{ij}=1+S$.
\end{itemize}
Here we make $B$, $M$ and $S$ satisfies the following condition:
\begin{itemize}
\item $\frac{d}{n-d} < \frac{S}{B} < \frac{d+1}{n-d-1}$.
\end{itemize}
Where $d$ is the degree of the regular graph $G$. It is easy to
check that $\frac{d}{n-d} < \frac{d+1}{n-d-1}$ holds for any $d \in
[0,n-1)$. In particular, $d=4$ here. Thus, we have a matrix $M$ like
below:
\[
\begin{array}{|c|c|cccc}
&&1&2&...&n\\
\hline
1&1&.&.&...&.\\
2&1&.&.&...&.\\
...&1&...&...&...&...\\
n&1&.&.&...&.\\
\end{array}
\]
Note that in the instance of DSHP, each asset can take only three
values $\{1,1-B,1+S\}$.
\begin{theorem}DSHP is NP-hard, even if all the
assets can take only three different values.\end{theorem}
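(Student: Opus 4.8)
The plan is to give a polynomial reduction from MDS-RPG4 to DSHP using exactly the instance constructed above, with the extra stipulation $k=n-1$ and with $B,S$ taken to be any positive rationals whose ratio lies strictly between $\tfrac{d}{n-d}$ and $\tfrac{d+1}{n-d-1}$ (this interval is nonempty whenever $n>d+1$, and since $d=4$ the finitely many graphs with $n\le 5$ are handled directly). Building this instance is obviously polynomial, so it suffices to prove that the optimal DSHP value is an explicit, strictly decreasing affine function of $\gamma(G)$, the size of a minimum dominating set of $G$; any algorithm for DSHP would then compute $\gamma(G)$, and the decision version of DSHP is NP-hard by choosing the threshold accordingly.

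First I would recast the optimization combinatorially. Let $D=\{i:x_i=1\}$ be the set of assets sold at the first stage and $H=\{1,\dots,n\}\setminus D$ the set held to the second stage. (Note that here $c_i=1>\sum_j p_j f_{ij}$, the latter inequality being exactly the upper bound on $S/B$, so the Corollary above does not force any asset to be held; $H$ is a genuinely free choice.) Once $H$ is fixed, the constraints $\sum_i x_i+\sum_i y_{ij}=n-1$ and $x_i+y_{ij}\le1$ force us, in every scenario $j$, to sell all but one asset of $H$ at the second stage, and revenue is maximized by holding back the cheapest one. Hence the objective depends only on $H$. Using that for $i\in H$ the second-stage value is $1-B$ precisely when $i\in N[j]$ (the closed neighborhood of $j$ in $G$) and $1+S$ otherwise, and that $G$ is $d$-regular so $|N[j]|=d+1$, a short computation whose only nontrivial step is $\sum_j|H\cap N[j]|=(d+1)|H|$ should give
\[
\Phi(H)=n-1+B-|H|\cdot\frac{(d+1)(S+B)-nS}{n}-u(H)\cdot\frac{S+B}{n},
\]
where $u(H)$ is the number of vertices of $G$ not dominated by $H$.

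Now the upper bound $S/B<\tfrac{d+1}{n-d-1}$ is equivalent to $(d+1)(S+B)-nS>0$, and the lower bound $S/B>\tfrac{d}{n-d}$ is equivalent to $(d+1)(S+B)-nS<S+B$; thus both coefficients in $\Phi$ are positive and the $|H|$-coefficient is \emph{strictly smaller} than the $u(H)$-coefficient. Consequently, among dominating sets (where $u(H)=0$) the objective is largest when $|H|=\gamma(G)$. For an arbitrary $H$ with $|H|=\gamma(G)-t$, $t\ge1$, I would invoke the structural fact $u(H)\ge t$: if fewer than $t$ vertices were undominated, adjoining them to $H$ would yield a dominating set of size below $\gamma(G)$. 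Combining $u(H)\ge t$ with the strict gap between the two coefficients gives $\Phi(H)<\Phi(H^\ast)$ for a minimum dominating set $H^\ast$, and the case $|H|\ge\gamma(G)$ is immediate; so the optimal DSHP value equals $n-1+B-\gamma(G)\cdot\tfrac{(d+1)(S+B)-nS}{n}$, strictly decreasing in $\gamma(G)$. Since every asset in the constructed instance takes only the values $\{1,1-B,1+S\}$, this establishes the theorem.

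The main obstacle is precisely that last step: ruling out that an ``under-dominating'' held set $H$ (one too small to dominate $G$) beats a true minimum dominating set. This is exactly what makes both halves of the condition $\tfrac{d}{n-d}<\tfrac SB<\tfrac{d+1}{n-d-1}$ necessary — the upper bound pushes the optimizer toward small $H$ (so it really would prefer an under-dominating set if that paid off), while the lower bound guarantees that the per-vertex penalty $\tfrac{S+B}{n}$ for leaving a vertex undominated strictly exceeds the per-vertex gain $\tfrac{(d+1)(S+B)-nS}{n}$ from deleting a held asset, and then $u(H)\ge\gamma(G)-|H|$ tips every such trade against the optimizer. The remaining checks — feasibility ($|H|\ge1$, which holds since $\gamma(G)\ge n/(d+1)\ge1$) and nonemptiness of the ratio interval for $d=4$ — are routine.
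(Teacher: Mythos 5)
Your proof is correct, and it uses exactly the paper's reduction (same gadget, $k=n-1$, same window $\tfrac{d}{n-d}<\tfrac SB<\tfrac{d+1}{n-d-1}$), but you verify the reduction's correctness by a different, more global argument. The paper proceeds in two contradiction steps: first a local exchange (move one undominated vertex into the held set and re-choose which asset is withheld in the corresponding scenario) shows an optimal held set must dominate, using the lower bound on $S/B$; then an explicit revenue comparison between two dominating sets of different sizes shows it must be minimum, using the upper bound. You instead write the objective in closed form for an \emph{arbitrary} held set $H$, namely $\Phi(H)=n-1+B-\alpha|H|-\beta u(H)$ with $\alpha=\tfrac{(d+1)(S+B)-nS}{n}$ and $\beta=\tfrac{S+B}{n}$ (I checked this agrees with the paper's formula for $R$ when $u(H)=0$), observe that the two halves of the ratio condition are precisely $\alpha>0$ and $\alpha<\beta$, and kill the under-dominating sets with the combinatorial inequality $u(H)\ge\gamma(G)-|H|$. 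What your version buys: it handles all held sets uniformly (the paper's first step only rules out non-dominating sets by local improvability, and its exchange step has an index slip between the undominated vertex $i$ and its neighbor $h$ that your global formula sidesteps entirely), it makes transparent which half of the $S/B$ condition does which job, and it immediately gives the optimum value as the explicit affine function $n-1+B-\alpha\gamma(G)$ needed to set the decision threshold. What the paper's version buys is mainly expository locality: each contradiction is a single concrete swap in the matrix picture. One small thing worth stating explicitly in your write-up is that given an integral $x$, the optimal $y$ is attained by withholding exactly the cheapest held asset in each scenario (the inner problem is a fractional greedy selection), since that is what licenses reducing the whole objective to a function of $H$ alone.
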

\begin{proof}We will reduce the MDS-RPG4 problem to
DSHP with $k=n-1$,where $n=|V|$. W.L.O.G, we assume that graph
$G$ is connected.


Suppose in an optimal solution the set of $V\setminus D$ assets are
sold at the first stage, we will show that $D$ is a minimum
dominating set. First, we show that $D$ must be a dominating set.
Otherwise, there is a node $i$ in $V\setminus D$ that is not
dominated by any node in $D$. Since the graph is connected, $i$
should be connected to some node $h$ in $V\setminus D$. We will
construct another solution for DSHP: sell assets in $V\setminus (D
\cup \{h\})$ at the first stage. Since asset $h$ is not sold at the
first stage, we define a {\it baseline perturbed solution} as
selling asset $h$ under each scenario at the second stage. Note that
in the $h$th column of the matrix (which represents the $h$th
scenario of the second stage), in the original solution, some asset
$g$ in $D$ is not sold under this scenario (remember $k=n-1$ and
there must be one asset left unsold). Thus, we can make further
improvement to the baseline perturbed solution by selling asset $g$
instead of asset $h$ under this scenario. Note that since $h$ is not
dominated by any node in $D$, $M_{gh}=1+S$ while $M_{hh}=1-B$. This
scenario is showed briefly in the following matrix.
\[
\begin{array}{|c|c|ccc}
&&...&h&...\\
\hline
...&1&.&...&...\\
h&1&...&1-B&...\\
...&1&...&...&...\\
g&1&...&1+S&...\\
...&1&...&...&...\\
\end{array}
\]
Let $R$ be the revenue generated by the original optimal solution
and $R'$ be the revenue generate by the new solution. We get,
\begin{eqnarray*}
R' & = & R-1+\frac{1}{n}[\sum\limits_{j}f_{ij}-(1-B)+(1+S)]\\
&=& R-1+\frac{1}{n}[n-dB+(n-d-1)S-B-(1-B)+(1+S)]\\
&=& R + \frac{1}{n}[-dB+(n-d)S] \\
&>& R
\end{eqnarray*}
This contradicts to the assumption that $D$ is not a dominating set
and selling assets in $V\setminus D$ induces an optimal solution.
Thus, $D$ must be a dominating set.

Next, we will show that if selling assets in $V\setminus D$ at the
first stage induces an optimal solution for DSHP, $D$ must be a
minimum dominating set. Otherwise, suppose there is another
dominating set $D'$ with $|D'|<|D|$. Since both $D$ and $D'$ are
dominating sets, at the second stage, for each scenario $s$, there
exists an asset(node) $g$ in $D$(or $D'$) such that $M_{gs}$ is
$1-B$. Let the revenues generated by both sets be $R$ and $R'$.
Thus, We get
\begin{eqnarray*}
R=n-|D|+\frac{1}{n}[|D|(n-dB+(n-d-1)S-B)-n(1-B)]
\end{eqnarray*}
and
\begin{eqnarray*}
R'= n-|D'|+\frac{1}{n}[|D'|(n-dB+(n-d-1)S-B)-n(1-B)]
\end{eqnarray*}
Thus,
\begin{eqnarray*}
R'-R&=&|D|-|D'|+\frac{1}{n}[(|D'|-|D|)(n-dB+(n-d-1)S-B)]\\
&=&|D|-|D'|+(|D'|-|D|)+\frac{1}{n}[(|D'|-|D|)(-(d+1)B+(n-d-1)S)]\\
&=&\frac{1}{n}[(|D'|-|D|)(-(d+1)B+(n-d-1)S)]\\
&>&0
\end{eqnarray*}
This contradicts to the assumption that $D$ is a minimum dominating
set and selling assets in $V\setminus D$ induces an optimal
solution. Thus, $D$ must be a minimum dominating set. Combining the
above two steps, we can get that in an optimal solution of DSHP, if
the set of $V\setminus D$ assets are sold at the first stage, $D$
must be a minimum dominating set.

Now we show the reverse direction. Suppose $D$ is a minimum
dominating set of $G$, we will show that selling assets in
$V\setminus D$ at the first stage will induce an optimal solution
$OPT$ for DSHP. Let the revenue generated by $OPT$ is $R$.
Otherwise, suppose there is another optimal solution $OPT'$ with
revenue $R'> R$. According to the argument above, $OPT'$ will induce
a smaller dominating set than $D$ (Note that dominating sets with
the same size will generate the same revenue). This contradicts to
the assumption that $D$ is a minimum dominating set. Therefore, if
$D$ is a minimum dominating set of $G$, selling assets in
$V\setminus D$ at the first stage will induce an optimal solution
for DSHP.

In all, we have reduced the MDS-RPG4 problem to DSHP. Thus, DSHP is
NP-hard, even if all the assets can take only three different
values.
\end{proof}

\section{The Approximation Algorithm} \label{sec:aa}
He {\it et. al.} \cite{H12} gives a
$\max\{\frac{1}{2},\frac{k}{n}\}$ algorithm for DSHP.  Here we
assume that each asset can take only three different values
$\{V_S,V_M,V_L\}$, which satisfies the following relationship $V_S <
V_M < V_L$. We present another approximation algorithm below.
\begin{algorithm}
\caption{A Heuristic for DSHP when assets can take only three
different values}
\begin{algorithmic}[1]
\STATE Suppose at the first stage, there are $t_L$ assets with value
$V_L$ and $t_M$ assets with value $V_M$ . If $t_L +t_M \leq k$, sell
all of them; otherwise, sell $k$ of them. \STATE If $(t_L+t_M)<k$,
under each scenario for the second stage, sell the top $k-(t_L+t_M)$
most expensive assets that have not been sold at the first stage.
\end{algorithmic}
\end{algorithm}
\begin{proposition}
Algorithm 1 is a $\frac{V_M}{V_L}$-approximation algorithm for DSHP
when assets can take only three different values $\{V_S,V_M,V_L\}$.
Here $V_S < V_M < V_L$.
\end{proposition}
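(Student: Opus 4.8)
The plan is to bound the optimal revenue $R^{*}$ from above by the revenue that a relaxed, ``clairvoyant'' strategy could obtain, and to bound the revenue $R_{\mathrm{ALG}}$ of Algorithm~1 from below by a corresponding quantity, so that the ratio of the two is at least $V_M/V_L$. The key observation is that every asset contributes to the objective either through its first-stage value (one of $V_S,V_M,V_L$) or through an expected second-stage value $\sum_j p_j f_{ij}$, and in both the optimal and the algorithmic solution exactly $k$ ``asset slots'' are filled in each scenario. So I would first argue that $R^{*} \le V_L \cdot (\text{number of asset-slots sold}) $ in an averaged sense, i.e. $R^{*}\le k V_L$ is too crude; instead I want a slot-by-slot comparison. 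Concretely, sort things so that the $k$ most valuable first-stage assets have total first-stage value $T$; then $R^{*}\le T + (\text{something} \le (k-\#\text{first stage sales})V_L$ per scenario$)$. The cleanest route: show $R^{*} \le k V_L$ is not tight enough, so instead pair each slot in the optimum with a slot in the algorithm's solution of value at least $V_M/V_L$ times as large.

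The core argument I would carry out is a case split on whether $t_L+t_M \ge k$ or $t_L+t_M < k$. In the first case, Algorithm~1 sells $k$ assets at the first stage, each of value $\ge V_M$, so $R_{\mathrm{ALG}} \ge k V_M$ (there are no second-stage sales). Meanwhile $R^{*}$ can sell at most $k$ asset-slots per scenario and no asset is worth more than $V_L$, so $R^{*} \le k V_L$, giving $R_{\mathrm{ALG}}/R^{*} \ge V_M/V_L$. In the second case $t_L+t_M<k$: Algorithm~1 sells all $t_L+t_M$ high/medium assets at the first stage (contributing $t_L V_L + t_M V_M \ge (t_L+t_M)V_M$) and then in each scenario sells the $k-(t_L+t_M)$ most expensive remaining assets. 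I would argue that the optimum's total second-stage expected revenue, restricted to the assets not forced to the first stage, cannot exceed $V_L$ times the number of such slots, and that the greedy top-$(k-(t_L+t_M))$ choice per scenario is in fact optimal among second-stage-only strategies on that asset set (a standard exchange argument, essentially the one already used in the Proposition of Section~3.1), so the algorithm's second-stage revenue is at least $V_M/V_L$ times any competitor's. Summing the first-stage and second-stage contributions and using that each term on the algorithm side is $\ge V_M/V_L$ times the corresponding upper bound on the optimum side yields the claimed ratio.

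The main obstacle I anticipate is handling the \emph{interaction} between the two stages in the optimal solution: the optimum might choose to hold some high-value first-stage asset in order to free a second-stage slot for an even-higher realization, and it might sell a medium first-stage asset rather than a large one in some scenario to balance across scenarios. To deal with this I would not try to describe the optimal solution structurally; instead I would use the crude but robust bound that the optimum's objective is at most $V_L$ times the number of (asset, time) sales it makes, which across the $m$ scenarios and the first stage is at most $V_L\big( (\#\text{first-stage sales}) + \sum_j p_j (k - \#\text{first-stage sales})\big) \le k V_L$ when probabilities sum to $1$. Pairing this with $R_{\mathrm{ALG}} \ge k V_M$ in case one, and a slightly more careful per-slot version in case two, should suffice; the delicate point is making the case-two bound $R_{\mathrm{ALG}} \ge (V_M/V_L) R^{*}$ rigorous without overcounting asset slots that are shared between stages, for which I would lean on the constraint $x_i + y_{ij}\le 1$ to ensure no asset is double-counted.
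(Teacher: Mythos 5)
Your case $t_L+t_M\ge k$ is correct and is exactly what the paper does: $R_{\mathrm{ALG}}\ge kV_M$ versus $R^*\le kV_L$. The gap is in the case $t_L+t_M<k$, where the two concrete bounds you propose do not close the argument. First, the claim that ``the algorithm's second-stage revenue is at least $V_M/V_L$ times any competitor's'' is false as stated: the optimum may sell fewer assets at the first stage and therefore has \emph{more} second-stage slots per scenario, drawn from a \emph{larger} asset pool, so greedy optimality of the top-$(k-(t_L+t_M))$ choice on the algorithm's smaller pool with fewer slots says nothing about a ratio against that competitor. Second, the fallback bound $R^*\le kV_L$ is genuinely too weak here: if, say, $t_L+t_M=1$ and all remaining assets are worth $V_S$ in every scenario, the algorithm collects roughly $V_M+(k-1)V_S$, which can be far below $(V_M/V_L)\,kV_L=kV_M$; the approximation guarantee still holds in such examples only because $R^*$ is also small, which the bound $kV_L$ does not see.

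What is missing is the execution of the slot-pairing you sketch, and it requires three specific ingredients that the paper supplies. (i) Normalize the optimum so that its first-stage sales $F_{OPT}$ are a subset of the algorithm's $F_{ALG}$ (any first-stage sale of a $V_S$-valued asset can be deferred to the second stage without loss). (ii) With $d=|F_{ALG}|-|F_{OPT}|$, prove the per-scenario inequality $V_{S_{OPT}}\le d\,V_L+V_{S_{ALG}}$: removing the $d$ assets of $F_{ALG}\setminus F_{OPT}$ from the optimum's second-stage pool and reducing its slot count by $d$ costs at most $d\,V_L$, after which a greedy-versus-greedy comparison on nested pools applies. (iii) Charge that $d\,V_L$ loss against the $d$ extra first-stage sales the algorithm makes, each worth at least $V_M$ --- this is precisely where the factor $V_M/V_L$ enters in this case, not through any uniform per-slot bound. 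Your ``pair each slot'' intuition is the right one, but without (i)--(iii) the case $t_L+t_M<k$ is not proved.
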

\begin{proof}
Let $ALG_1(I)$ be the optimal value that Algorithm 1 gets on an
instance $I$ while $OPT(I)$ is the optimal value of the DSHP on
instance $I$. It is easy to see that every asset sold at the first
stage in Algorithm 1 has value of either $V_L$ or $V_M$.

If $(t_L+t_M)\geq k$, $ALG_1(I) \geq kV_M$ and $OPT(I) \leq kV_L$.
Therefore, $OPT(I) \leq \frac{V_L}{V_M} ALG_1(I)$. If $(t_L+t_M)<
k$, let $F_{OPT(I)}$ and $F_{ALG_1(I)}$(the size of it is $t_L+t_M$)
be the sets of assets that are sold in the optimal solution and in
Algorithm 1 respectively, while $V_{F_{OPT(I)}}$ and
$V_{F_{ALG_1(I)}}$ be the values generated in the first stage
respectively. It is easy to see that $F_{OPT(I)} \subseteq
F_{ALG_1(I)}$ \footnote{If some asset $i$ with value $V_S$ is sold
at the first stage in an optimal solution, we can make asset $i$
sold at the second stage instead of the first stage without changing
the optimal value.}. At the second stage, under each scenario, the
top $k-|F_{OPT(I)}|$ most expensive assets in
$S_{OPT(I)}=A/F_{OPT(I)}$ should be sold in the optimal solution.
Let $V_{S_{OPT(I)}}$ be the value generated at the second stage in
optimal solution. Moreover, the top $k-(t_L+t_M)$ most expensive
assets in $S_{ALG_1(I)}=A/F_{ALG_1(I)}$ should be sold in Algorithm
1. Note $S_{ALG_1(I)} \subseteq S_{OPT(I)}$. Let $V_{S_{ALG_1(I)}}$
be the value generated at the second stage by Algorithm 1. Moreover,
we define $d=(k-|F_{OPT(I)}|)-(k-|F_{ALG_1(I)}|)$. Then we can get
the following relationship:
$$
V_{S_{OPT(I)}} \leq d*V_L+V_{S_{ALG_1(I)}}
$$
Note that if the above relationship holds for each scenario of the
second stage, the relationship will hold for case when we consider
the expected value of revenue generated across all scenarios at the
second stage. Thus, we abuse the notation a little bit such that
$V_{S_{OPT(I)}}$ and $V_{S_{ALG_1(I)}}$ refer to both revenues for
each scenario and the expected revenue across all scenarios.
Therefore,
\begin{eqnarray*}
OPT(I)&=&V_{F_{OPT(I)}}+V_{S_{OPT(I)}} \\
&\leq&
\frac{V_L}{V_M}V_{F_{ALG_1(I)}}*\frac{|F_{OPT(I)}|}{|F_{ALG_1(I)}|}+d*V_L+V_{S_{ALG_1(I)}}\\
&\leq&
\frac{V_L}{V_M}[V_{F_{ALG_1(I)}}*\frac{|F_{OPT(I)}|}{|F_{ALG_1(I)}|}+d*V_M+V_{S_{ALG_1(I)}}]\\
&\leq&
\frac{V_L}{V_M}[\frac{V_{F_{ALG_1(I)}}}{|F_{ALG_1(I)}|}({|F_{OPT(I)}|}+d)+V_{S_{ALG_1(I)}}]\\
&=&\frac{V_L}{V_M}[V_{F_{ALG_1(I)}}+V_{S_{ALG_1(I)}}]\\
 &=& \frac{V_L}{V_M} ALG_1(I)
\end{eqnarray*}
Therefore, Algorithm 1 is a $\frac{V_M}{V_L}$-approximation
algorithm.
\end{proof}

In the next, we will show the approximation ratio $\frac{V_M}{V_L}$
is also tight. Consider the following example. There are four assets
to sell and three scenarios at the second stage. Let $k=1$.
Algorithm 1 will sell the second asset at the first stage, which
generates revenue $V_M$. However, in an optimal solution, we should
not sell any asset at the first stage. Nevertheless, we should sell
the most expensive one under each scenario at the second stage,
which generates revenue $V_L$.
\[
\begin{array}{|c|c|ccc}
&0&1&2&3\\
\hline
1&V_S&V_L&V_S&V_S\\
2&V_M&V_S&V_S&V_S\\
3&V_S&V_S&V_L&V_S\\
4&V_S&V_S&V_S&V_L\\
\end{array}
\]
Thus, the approximation ratio $\frac{V_M}{V_L}$ is tight.

Note that the $\frac{V_M}{V_L}$ approximation ratio could be better
than the $\max\{\frac{1}{2},\frac{k}{n}\}$ approximation ratio
achieved by algorithms in \cite{H12} under many scenarios. For
instance, if $V_M > 0.5V_L$ and $k<0.5n$, our algorithm will perform
better. In a real world, the annual return of an asset will usually
be much less than $100\%$ (i.e. $\frac{V_M}{V_L} \geq 0.5$) while an
asset manager seldom closes his positions on half of its portfolio
holdings (i.e. $\frac{k}{n}<0.5$). Thus, our algorithm will have a
good chance to perform better in practice.

\section{Conclusion} \label{sec:final}
We studied DSHP under the constraint that each asset can take only a
constant number of different values in this paper. We show that if
each asset can take only two values, the problem becomes
polynomial-time solvable. However, even if each asset can take three
different values, DSHP is still NP-hard. A $\frac{V_M}{V_L}$
approximation algorithm is also given under this setting.

There are quite a few interesting open problems raised in this line
of work. First of all, we can extend our model to the multistage
case instead of two stages. It would be interesting to design an
approximation algorithm for that case. Another interesting direction
is to impose some restrictions on the relationships among prices of
all the assets, such that the prices of assets correlate with each
other in some way, and study the complexity in that case.


\begin{thebibliography}{}
\bibitem{GJ79}
M.R.Garey and D.S.Johnson,``Computers and Intractability: a Guide to
the Theory of NP-Completeness",W.H.Freeman and Company,1979.

\bibitem{H12}
He, Q., Ahmed, S. and Nemhauser, G. L,``Sell or Hold: A simple
two-stage stochastic combinatorial optimization problem", {\it Oper.
Res. Lett.}, 40, 69-73,2012.

\bibitem{FE00}
Stephen F. LeRoy and Jan Werner,``Principles of Financial
Economics", Cambridge University Press; 1 edition, 2000.

\bibitem{W11}
Harold Evensky, Stephen M. Horan, and Thomas R. Robinson, ``The New
Wealth Management: The Financial Advisors Guide to Managing and
Investing Client Assets", Wiley, 1 edition, 2011.

\bibitem{DRS05}
K. Dhamdhere, R. Ravi, and M. Singh, ``On two-stage stochastic
minimum spanning trees", in IPCO XI, M. J\"{u}nger and V. Kaibel,
eds., vol. 3509 of Lecture Notes in Computer Science, Springer,
2005, pp. 321-334.

\bibitem{GRS04}
A. Gupta, R. Ravi, and A. Sinha, ``An edge in time saves nine: LP
rounding approximation algorithms for stochastic network design", in
Proceedings of the 45th Annual IEEE Symposium on Foundations of
Computer Science, IEEE, 2004, pp. 218-227.

\bibitem{GRS07}
A. Gupta, R. Ravi, and A. Sinha, ``LP rounding approximation
algorithms for stochastic network design", Mathematics of Operations
Research, 32 (2007), pp. 345-364.

\bibitem{KS06}
N. Kong and A. J. Schaefer, ``A factor 1/2 approximation algorithm
for two-stage stochastic matching problems", European Journal of
Operational Research, 172 (2006), pp. 740-746.

\bibitem{RS06}
R. Ravi and A. Sinha, ``Hedging uncertainty: Approximation
algorithms for stochastic optimization problems", Mathematical
Programming, 108 (2006), pp. 97-114.

\end{thebibliography}
\end{document}